\theoremstyle{definition}
\newtheorem*{definition}{Definition}
\newtheorem{theorem}{Theorem}
\newtheorem{prep}{Proposition}
\newtheorem{lemma}{Lemma}
\theoremstyle{remark}
\newtheorem*{remark}{Remark}
\begin{document}
\title{On Circuit Complexity of Parity and Majority Functions in Antichain Basis}
\date{}
\author{Olga Podolskaya\thanks{Department of Discrete Mathematics, Faculty of Mechanics and Mathematics, Moscow State University. E-mail: olgavikonov@gmail.com.}}
\maketitle 
\begin{abstract}
We study the circuit complexity of boolean functions in a certain infinite basis. The basis consists of all functions that take value $1$ on antichains over the boolean cube. 
%As a basis we consider the set of all characteristic functions of the antichains over the boolean cube.
We prove that the circuit complexity of the parity function and the majority function of $n$ variables in this basis is $\lfloor \frac{n+1}{2} \rfloor$ and $\left\lfloor \frac{n}{2} \right \rfloor +1$ respectively. We show that the asymptotic of the maximum complexity of $n$-variable boolean functions in this basis equals $n.$
\end{abstract}

\subsection{Introduction}\label{vvedeine}
A \textit{parity function} $p_n(x_1,\ldots, x_n)$ is a boolean function such that it is equal to $1$ iff \mbox{$x_1+\ldots +x_n\equiv 1\pmod{2}.$} A \textit{majority function} $m_n(x_1,\ldots,x_n)$ is a boolean function  such that it is equal to $1$ iff \mbox{$x_1+\ldots+x_n\geq \left \lceil \frac{n}{2}\right \rceil.$}

Consider a boolean cube with the coordinate-wise partial order on its vertices. An \textit{antichain} is such a subset of the boolean cube that no two of its
tuples are comparable. An \textit{antichain function} is a characteristic function of an antichain over the boolean cube. We denote by $AC$ the infinite complete basis consisting of all antichain functions of any number of variables~\cite{k-z1}. 

A circuit computing a boolean function in $AC$ basis we call \textit{$AC$ circuit}. In this paper we consider $AC$ circuits computing the parity function and the majority function. The definition of a circuit and other notation see in~\cite{lupanov, wegener}.

For a circuit $S$, we denote by $L(S)$ the \textit{complexity of this circuit} which is the number of its gates. For a function $f$, we denote by $L(f)$  the \textit{complexity of this function} which is the minimum number of gates required by a circuit to compute $f.$ The \textit{Shannon function} $L(n)$ is the minimum number of gates sufficient to compute any function of $n$ variables. 

The circuit complexity in $AC$ basis has been studied in papers~\cite{k-z1,k-z2,pod1,pod2}. The upper bound for the Shannon function $L(n)\leq n+1$ for all $n$ was given in~\cite{k-z1}. This bound was improved in~\cite{pod2}, where was shown the upper bound $L(n) \leq n$ for all $n.$ 

Concerning lower bounds, they have been studied in~\cite{k-z1, k-z2}. In~\cite{k-z1} O.\,M.\,Kasim-Zade proved the bound $L(p_n)\geq \Omega(n^{1/3}).$ Later, in~\cite{k-z2}, by extending of the method of paper~\cite{k-z1}, he obtained a stronger result: $L(p_n)\geq \Omega((n/\ln n)^{1/2}).$ This result was improved in~\cite{pod1}, where the lower bounds $\Omega(\sqrt n)$ for the complexity of the parity function, the majority function and almost all boolean functions of $n$ variables were shown.

The aim of this paper is to prove the following results. For all natural $n$ we prove that $L(p_n)=\lfloor \frac{n+1}{2} \rfloor,$ $L(m_n)=\left\lfloor \frac{n}{2}\right \rfloor +1.$  Consequently, we obtain the asymptotic of the Shannon function: $L(n)=\Theta(n).$

The paper is organized in the following way. In Sections~\ref{lower proof} and~\ref{lower majority} we prove that the complexity of an AC circuit computing the parity function $p_n(x_1,\ldots, x_n)$ is at least $\left \lfloor \frac{n+1}{2} \right\rfloor$ and the majority function --- at least $\left\lfloor \frac{n}{2} \right \rfloor +1$ for all $n.$ In Section~\ref{the idea} the idea of the proof is shortly described. In Section~\ref{upper proof} using the ideas from paper~\cite{pod2} we prove that for all $n$ the complexity of the parity and majority functions of $n$ variables in $AC$ basis are at most  $\left \lfloor\frac{n}{2} \right \rfloor +1$ and $\left\lfloor \frac{n+1}{2}  \right \rfloor$ respectively.

\subsection{Preliminaries}\label{denomination}
In this section we introduce some notation.

Denote by $[n]$ the set of natural numbers $\{1, 2, \ldots, n\}.$ 

Сonsider $n$-dimensional boolean cube with the coordinate-wise partial order on its vertices. 
Denote by $\boldsymbol{x}$ any tuple $(x_1,\ldots,x_n)$ of the boolean cube.

Denote by $\boldsymbol{x}^P,$ where $P \subseteq [n],$ the tuple $\boldsymbol{x}$ such that $\forall k \in [n]$ $x_k=1 \Leftrightarrow k \in P.$

For a given circuit with $n$ nodes of zero fan-in, assign to these nodes \textit{input variables} $x_1, x_2,\ldots,x_n.$ Call these nodes \textit{inputs of the circuit} (or circuit inputs).

A circuit is called \textit{reduced} if no output of a gate is inputed twice in any input of some other gate. Note that any $AC$ circuit of complexity $s$ is equivalent to a reduced $AC$ circuit of complexity at most $s$ (see~\cite{k-z2}).

A numeration on circuit gates is called \textit{regular} if any input of a gate is connected to an output of a gate with smaller index or an input of the circuit. We can introduce a regular numeration on any circuit~\cite{lupanov2}. For any circuit, we fix an arbitrary regular numeration on its gates. From now on we study reduced circuits with the fixed regular numerations.

Consider an $AC$ circuit $S$ with the regular numeration on its $s$ gates. Denote by $e_k$ the gate with index $k$. Denote by $g_k$ the antichain function corresponding to the gate $e_k$. Denote by $h_k$ the function computed by a subcircuit with the output $e_k.$ On a given input $\boldsymbol x$ denote by $h_k(\boldsymbol x)$ the value of the $k$-th function $h_k$ that is, the value of the function $g_k$ applied to the values $g_i(\boldsymbol x),$ where $i<k.$ 

Call the maximal boolean $n$-tuple $\boldsymbol 1=(1,1, \ldots, 1)$ the \textit{top} tuple of the cube, and the minimal boolean $n$-tuple $\boldsymbol 0=(0,\,0,\,\ldots,\,0),$ the \textit{bottom} one.

\subsection{Lower Bound: Proof Idea}\label{the idea}
Consider arbitrary $AC$ circuit $S$ with $n$ inputs computing the parity function $p_n(x_1,\ldots, x_n)$ of $n$ variables. In this section we roughly describe the idea of the proof of the inequality $L(S)\geq \left \lfloor \frac{n+1}{2} \right\rfloor$.

Our proof technique is centered on a new notion of \textit{first non-zero gate} on a given input. Here we briefly explain the idea of the proof without technical detail, so right now we do not need an accurate definition of this notion. Thus a preliminary description is given below. 

Consider a circuit $S$ with gates $e_1,\ldots,e_s.$ For arbitraty tuple~$\boldsymbol{\alpha},$ let $\{e_{i_1},\ldots,e_{i_k}\}$ be the set of all gates such that for all $1\leq j \leq k$ 
\begin{enumerate}[{1.}]
\item $h_{i_j}(\boldsymbol{\alpha})=1$;
\item At least one input of the gate $e_{i_j}$ is connected directly to an input of the circuit.
\end{enumerate}

The gate $e_m$ with the minimal index among the gates $e_{i_1},\ldots,e_{i_k}$ initially we will call the first non-zero gate on the tuple $\boldsymbol{\alpha}.$ 

Informally, the difference between this preliminary description and the accurate definition of the notion is the following. In the description a so-called first non-zero gate is chosen among all the gates that output $1$ on a given input and for all these gates, at least one input is connected to an input of $S$. But in the accurate definition a first non-zero gate is chosen among the gates that output $1$ on a given input and also satisfy a certain property. The accurate definition is given below (see Definition in Section~\ref{accurate def}).

Now let us shortly describe the idea of the proof. We will construct a certain chain $C$ consisting of $n+1$ \,$n$-component tuples. We initially include the tuples $\boldsymbol{0}$ and $\boldsymbol{1}$ in the chain. Then we add tuple by tuple to the chain $C$ by the following algorithm. Assume a tuple $\boldsymbol {\beta}$ is added to the chain $C.$ Input $\boldsymbol {\beta}$ to the circuit. Let $e_k$ be the first non-zero gate on this tuple that is, it outputs $1$ on this input. By the second property of the above-mentioned description there is a component of the tuple $\boldsymbol {\beta}$ which is essential for the function $g_k.$ Let us flip the value of this component. Thus we get a new tuple $\boldsymbol {\gamma}$ which we also add to the chain. Since the gate $e_k$ corresponds to the antichain function, this gate can not be first non-zero one on $\boldsymbol {\gamma}.$ 

We construct the chain in such a way that no gates of the circuit $S$ can be first non-zero twice on the tuples of $C.$ Recall that the circuit $S$ computes the parity function. Thus if we input the tuples of $C$ in the circuit then first non-zero gates occur roughly speaking on half of the tuples of $C.$ This implies that the total number of gates in the circuit $S$ is equal to at least half of the size of $C.$ The constructing process is described in detail in Section~\ref{lower proof}.

\subsection{Lower Bound for Parity Function}\label{lower proof}
In this section we prove the following
\begin{prep}~\label{parity}
For all natural $n,$ $L(p_n)\geq \left\lfloor \frac{n+1}{2} \right\rfloor.$
\end{prep}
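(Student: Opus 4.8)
The plan is to formalize the chain-construction argument sketched in Section~\ref{the idea}. I would first pin down the accurate definition of the \emph{first non-zero gate} on a tuple $\boldsymbol\alpha$: among all gates $e_k$ with $h_k(\boldsymbol\alpha)=1$ that have an input wired directly to a circuit input (and satisfy whatever extra property the accurate definition in Section~\ref{accurate def} imposes), take the one of minimal index. The crucial structural fact I would isolate and prove as a lemma is this: if $e_k$ is the first non-zero gate on $\boldsymbol\alpha$, then because $g_k$ is an antichain function there is a coordinate $x_j$ that is essential for $g_k$ on the current input, and flipping that single coordinate produces a tuple $\boldsymbol\gamma$ on which $g_k$ evaluates to $0$ (antichain functions take value $1$ on a single antichain, so moving along one essential coordinate destroys membership). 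Hence $e_k$ can never be the first non-zero gate on $\boldsymbol\gamma$.

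Next I would run the chain-building algorithm explicitly. Starting from $\boldsymbol 0$ and $\boldsymbol 1$, I grow a chain $C$ in the cube by, at each step, taking the most recently added tuple $\boldsymbol\beta$, locating its first non-zero gate $e_k$, and flipping the essential coordinate to obtain the successor $\boldsymbol\gamma=\boldsymbol\beta\oplus e_j$. To guarantee $C$ really is a chain (totally ordered under the coordinate-wise order) and has exactly $n+1$ tuples, I would arrange the flips so that each step changes one previously-unflipped coordinate in a consistent direction, so the process terminates after $n$ flips having visited $n+1$ comparable tuples. The heart of the counting argument is the invariant that \emph{no gate is the first non-zero gate on two different tuples of $C$}: this follows from the antichain lemma above together with the monotone/regular structure of the chain, so the number of first-non-zero events along $C$ is at most $L(S)$.

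To convert this into the bound, I would use that $S$ computes $p_n$. Along the chain $C$ the parity of the coordinates alternates at every single-coordinate flip, so $p_n$ takes value $1$ on roughly half of the $n+1$ tuples, and on each tuple where $p_n=1$ there must be at least one gate outputting $1$, hence a well-defined first non-zero gate. Counting: at least $\left\lfloor\frac{n+1}{2}\right\rfloor$ of the tuples of $C$ force a distinct first non-zero gate, and since these gates are pairwise distinct, $L(S)\geq\left\lfloor\frac{n+1}{2}\right\rfloor$. Because $S$ was an arbitrary $AC$ circuit for $p_n$, this yields $L(p_n)\geq\left\lfloor\frac{n+1}{2}\right\rfloor$.

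The main obstacle I anticipate is the injectivity invariant — proving that a gate cannot be first non-zero on two tuples of the chain. The naive antichain argument only shows $e_k$ is not first non-zero on its \emph{immediate} successor $\boldsymbol\gamma$; ruling it out on all later tuples of $C$ is more delicate and is almost certainly why the paper replaces the preliminary description of ``first non-zero gate'' with a refined accurate definition carrying an extra property. I expect that extra property is engineered precisely so that once $e_k$ has been used, the subsequent tuples of the chain keep $g_k$ (or the relevant subcircuit output) at a value that disqualifies $e_k$ from ever being minimal again; getting that property stated correctly, and verifying it is preserved by every flip while simultaneously keeping $C$ a genuine chain of the right length, is the delicate part of the proof.
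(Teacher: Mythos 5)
Your outline reproduces the proof idea from Section~3 of the paper, but the two places you flag as ``delicate'' are exactly where the proposal breaks down, and you resolve neither of them. First, your ``crucial structural fact'' is false as stated: flipping a circuit input that feeds $e_k$ does not by itself force $g_k$ to output $0$ on the new tuple, because the inputs of $g_k$ are not only circuit inputs but also outputs of earlier gates, and those outputs can change under the flip. The two input vectors seen by $g_k$ then need not be comparable, and the antichain property gives nothing. The paper's construction is built to neutralize precisely this: it maintains the invariant (Property~2 of the construction) that every previously considered gate outside the set $\mathfrak{E}_i$ outputs $0$ on the top and bottom tuples of the current subcube, and it re-establishes this invariant after each flip by fixing further variables (Lemma~1 guarantees a free essential variable exists each time, Lemma~2 handles gates that fall into $\mathfrak{E}_i$, Lemma~3 shows the top-side and bottom-side repairs cannot collide). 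Only with this invariant in place do the inputs to $g_k$ on any two chain tuples become comparable, which is what makes the injectivity statement (the paper's Lemma~4) go through --- including for non-adjacent tuples, the case your ``naive antichain argument'' worry correctly identifies but leaves unanswered.

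Second, your chain-growing algorithm is underspecified in two ways that matter. Flipping a coordinate of ``the most recently added tuple'' in whatever direction the circuit dictates does not keep the accumulated set of tuples totally ordered; the paper instead grows the chain from both ends of a shrinking subcube (Case~b flips a variable to $0$ and adds the new top tuple, Case~c flips a variable to $1$ and adds the new bottom tuple), so every added tuple is the top or the bottom of the current subcube and comparability is automatic --- the direction of each flip is forced by the circuit, not chosen by you, so a ``consistent direction'' cannot be arranged. Moreover, your algorithm stalls whenever the current tuple admits no first non-zero gate (for instance when the circuit outputs $0$ there), yet you must still reach $n+1$ tuples. The paper handles this with a separate Stage~2: once all $s$ gates have been processed, it fixes arbitrary free variables one at a time and uses the fact that the circuit computes parity --- so the output must change under each single flip --- to exhibit a gate, necessarily outside $\mathfrak{E}$, that serves as the next first non-zero gate. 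Without the invariant, the two-sided construction, and Stage~2, the final counting cannot be carried out; as it stands, the proposal is a restatement of the paper's proof idea rather than a proof.
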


This section is organized as follows. In Subsection~\ref{construction} we construct the chain which plays a pivotal role in the proof of Proposition~\ref{parity}. Subsection~\ref{lemms' proofs} contains the proofs of the lemmas that we use during the analysis in Subsection~\ref{construction}.
\subsubsection{Construction}\label{construction}
In this subsection we construct a chain $C$ of size $n+1$ consisting of $n$-component tuples. Initially, we add the top and the bottom tuples  $\boldsymbol 1$ and  $\boldsymbol 0$ to the chain. Further, we describe how to obtain the rest $n-1$ tuples.

Informally, we construct the chain from both sides simultaneously. Namely, we descend from the top tuple and ascend from the bottom one by a certain rule. For clarity, we consider descending from the tuple $\boldsymbol 1$. By the algorithm mentioned in Section~\ref{the idea} we find a certain component of the tuple $\boldsymbol 1$ and flip its value, e.g. change the value of this component to $0.$ We obtain a new tuple. Clearly, this tuple is comparable to the tuple $\boldsymbol 1.$ So, we add this new one to the chain. From now on the component whose value was flipped is fixed to $0.$ So, we obtained a subcube whose maximal tuple is the new tuple that we get and the minimal one is the tuple $\boldsymbol 0.$ Then we consider the problem on the subcube. For any subcube, let us also call its maximal and minimal tuples the top and the bottom ones respecitvely. We ascend from the bottom tuple in the same way .

To construct the chain, we consider the gates of the circuit according to the regular numeration, starting from the gate $e_1.$ The constructing process has the following parameters.

\begin{itemize}
\item[--] Index $i$ is a number of the latest gate considered. Note that $i\leq s.$ Specifically, on the $i$-th step of the process we consider the gate $e_i$ and if necessary gates with smaller indices.
\item[--] Simultaneously with the chain we construct two sets: $F_i,\,T_i~\subseteq~[n].$ The set $F_i$ consists of the indices of the input variables that are fixed to $0$; $T_i$ respectively consists of those that are fixed to $1.$ 
We call by \textit{free variables} the input variables with indices from $[n]\setminus (F_i\cup T_i)$ (since their values are not yet fixed) and by \textit{free inputs} of the circuit (or free circuit inputs) the inputs that correspond to these free variables. On the $i$-th step we consider the subcube whose top tuple is $\boldsymbol{x}^{[n]\setminus {F_i}}$ and the bottom one is $\boldsymbol{x}^{T_i}.$ 
\item[--] The set $\mathfrak E_i$ consists of all gates $e_k,$ where $k<i,$ such that their inputs are not connected to the free inputs of the circuit. In other words, $\mathfrak E_i$ is a set of gates whose inputs may be connected to outputs of gates with indices smaller that $i$ and to the circuit inputs with indices from the set $F_i \cup T_i$. 

Note that for all $i:$ $\mathfrak E_i\subseteq \{e_1,e_2,\ldots,e_i\}$. We add a gate to $\mathfrak E_i$ when after the $i$-th step all of its inputs connected directly to the inputs of the circuit get fixed to $1$ or $0.$ Note also that with growing $i$ the sets  $F_i$, $T_i$ and $\mathfrak E_i$ are not decreasing.
\end{itemize}

Now we are ready to define a key notion that will be used in the proof.

\begin{definition}\label{accurate def}
Consider a circuit $S$ with gates $e_1,\ldots,e_s.$ For arbitraty tuple $\boldsymbol{\alpha}$ let $\{e_{i_1},\ldots,e_{i_k}\}$ be the set of all gates such that for all $1\leq j \leq k$
\begin{enumerate}[{1.}]
\item $h_{i_j}(\boldsymbol{\alpha})=1$;
\item $e_{i_j}\subseteq \{e_1,\ldots, e_i\} \setminus \mathfrak E_i$, that is,
at least one input of the gate $e_{i_j}$ is connected directly to a free input of the circuit.
\end{enumerate}

The gate $e_m$ with the minimal index among the gates $e_{i_1},\ldots,e_{i_k}$ is called the \textit{first non-zero gate} of the circuit on the tuple $\boldsymbol{\alpha}.$

\end{definition}
Let us stress again the difference between the preliminary description given in Section~\ref{the idea} and this accurate definition. In the definition from all the gates that output $1$ on $\boldsymbol \alpha$ we consider only such gates that some of their inputs are connected to some \textit{free} inputs of the circuit.

At the beginning of the constructing process we let $i=0$, $F_0=T_0=\emptyset$, $\mathfrak E_0=\emptyset$, $C=\{\boldsymbol 0, \boldsymbol 1\}.$

During the process the two following \textbf{properties} should be maintained.
\begin{enumerate}[{1.}]
\item $F_i\cap T_i =\emptyset.$ That is a variable can not be simultaneously fixed to $1$ and $0.$ 
\item\label{imp} For any gate $e_j \in \{e_1,e_2,\ldots,e_i\}\setminus \mathfrak E_i,$ for the function $h_j$ corresponding to this gate, we have: $h_j(\boldsymbol{x}^{[n]\setminus F_i})=0=h_j(\boldsymbol{x}^{T_i}).$
\end{enumerate}

Property~\ref{imp} means that after the $i$-th step if we input the top and the bottom tuples of the current subcube to the circuit then all the gates $e_k,$ where $k\leq i,$ $e_k\notin \mathfrak{E}_i,$ output $0$ on these inputs.

The constructing process has two stages. We start with Stage 1, then if the chain is not yet constructed we move to the Stage 2. In detail we explain this issue later. 

\textbf{Stage 1}. Now we describe the $(i+1)$-th step. Suppose first $i$ gates $e_1, e_2,\ldots, e_i$ are considered. We have constructed a certain set $F_i$ which includes the indices of the input variables that are fixed to $0.$ Similarly, there is a set $T_i$ consisting of the indices of the input variables that are fixed to $1.$ The current chain $C$ contains the tuples $\boldsymbol 0$ and $\boldsymbol 1$ as well as the tuples that we get simultaneously with constructing $F_i$ and $T_i$. Consider the current $(n - |F_i|- |T_i|)$-dimensional subcube. It has the top tuple $\boldsymbol{x}^{[n]\setminus F_i}$ and the bottom one $\boldsymbol{x}^T_i.$ Now consider the gate $e_{i+1}.$ All possible cases are considered below. Note they can not occur simultaneously.

\begin{enumerate}[{1.}]
\item \label{case 1}
Assume that inputs of the gate $e_{i+1}$ is connected directly to none of the circuit input with an index from the set $[n] \setminus (F_i \cup {T_i})$. We add the gate $e_{i+1}$ to the set $\mathfrak E_i.$ We obtain the new set $\mathfrak E_{i+1}.$ The $(i+1)$-th step is finished.
\item \label{case 2}
Otherwise, assume that at least one input of the gate $e_{i+1}$ is connected to a free circuit input. Thus the antichain function $g_{i+1}$ corresponding to the gate $e_{i+1}$ has at least one free essential variable $x_k$, where $k\in [n]\setminus (F_i\cup {T_{i}}).$ Let us verify whether Property~\ref{imp} is valid, i.e. whether the equation $h_{i+1}(\boldsymbol{x}^{[n] \setminus {F_i}})=h_{i+1}(\boldsymbol{x}^{T_i})=0$ is true.
    \begin{enumerate}[a.]
    \item \label{case a}
    If Property~\ref{imp} for the gate $e_{i+1}$ is valid then the $(i+1)$-th step is finished.
    \item \label{case b}
    Assume Property~\ref{imp} is not valid for the top tuple of the current subcube. That is, $h_{i+1}(\boldsymbol{x}^{[n]\setminus {F_i}})=1.$ In this case we fix to $0$ the value of the variable $x_k.$ Further, we let $F_{i+1}=F_i\cup \{k\},$ $T_{i+1}=T_i.$ Then we consider a subcube of dimension $(n-|F_{i+1}|-|{T_{i+1}}|).$ We add the top tuple $\boldsymbol{x}^{[n]\setminus F_{i+1}}$ of this subcube to the chain $C.$ 

Further, we verify whether Property~\ref{imp} is valid for all $e_j \in \{e_1,e_2,\ldots,e_i\}\setminus \mathfrak E_i$ on the $\boldsymbol{x}^{[n]\setminus F_{i+1}}.$ Let there be some gates that does not satisfy Property~\ref{imp} on this input. Let $e_l,$ where $l<i,$ be the first non-zero gate on $\boldsymbol{x}^{[n]\setminus F_{i+1}}$. 
\begin{lemma}\label{lemm 1} 
The antichain function $g_l$ which corresponds to the gate $e_l$ has at least one free essential variable. (See the proof in Section~\ref{lemms' proofs}.)
\end{lemma}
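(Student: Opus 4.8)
The plan is to reduce Lemma~\ref{lemm 1} to a single structural property of antichain functions and then read the conclusion off the Definition of the first non-zero gate.

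First I would prove the auxiliary fact: \emph{every antichain function that is not identically zero is essential in each of its variables.} This is immediate from the defining property of the basis. Let $g$ be an antichain function and let $\boldsymbol a$ be a tuple with $g(\boldsymbol a)=1$. For an arbitrary coordinate $t$, let $\boldsymbol a'$ be the tuple obtained from $\boldsymbol a$ by flipping its $t$-th component. The tuples $\boldsymbol a$ and $\boldsymbol a'$ differ in exactly one coordinate, hence they are comparable in the coordinate-wise order. Since $g^{-1}(1)$ is an antichain and contains $\boldsymbol a$, it cannot also contain $\boldsymbol a'$; therefore $g(\boldsymbol a')=0\neq g(\boldsymbol a)$, so $g$ depends on its $t$-th variable. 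As $t$ was arbitrary, $g$ is essential in all of its variables.

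Next I would apply this to $e_l$. By the Definition of the first non-zero gate, $e_l$ satisfies two conditions on the tuple $\boldsymbol{x}^{[n]\setminus F_{i+1}}$: its subcircuit value is $h_l(\boldsymbol{x}^{[n]\setminus F_{i+1}})=1$, and at least one of its inputs is wired directly to a free input of the circuit. The first condition says that $g_l$ takes the value $1$ at the tuple formed by the signals entering $e_l$, so $g_l$ is not identically zero; by the auxiliary fact $g_l$ is essential in each of its arguments. The second condition says that some free circuit input $x_j$ is literally one of the arguments of $g_l$. Combining the two, $g_l$ is essential in the free variable $x_j$, which is exactly the assertion of the lemma.

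The step I expect to be the real obstacle is making the word \emph{free} precise: the essential variable produced must be free at the current stage of the construction, that is, genuinely different from the component $x_k$ that was just fixed in passing from $\boldsymbol{x}^{[n]\setminus F_i}$ to $\boldsymbol{x}^{[n]\setminus F_{i+1}}$, since otherwise it would be useless for continuing the chain. To secure this I would first show, by induction on the gate index, that every gate of index smaller than $l$ takes the same value on $\boldsymbol{x}^{[n]\setminus F_i}$ and on $\boldsymbol{x}^{[n]\setminus F_{i+1}}$: the gates outside $\mathfrak E_i$ vanish on both tuples (on the first by the maintained invariant, on the second by the minimality of $l$ among the violating gates), while the gates inside $\mathfrak E_i$ have no direct connection to $x_k$ and depend only on unchanged signals. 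Consequently the only input of $e_l$ that can change between the two tuples is a direct circuit input, and the only circuit input that changes is $x_k$. Finally I would invoke the bookkeeping rule that a gate all of whose direct circuit inputs have been fixed is placed into $\mathfrak E$, hence is not flagged as violating; since $e_l$ \emph{is} flagged, it must retain a free circuit input distinct from $x_k$, and applying the auxiliary fact to that input completes the proof.
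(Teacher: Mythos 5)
Your proof is correct, but it takes a genuinely different route from the paper's. The paper proves Lemma~\ref{lemm 1} by a difference argument: by Property~\ref{imp}, $h_l(\boldsymbol{x}^{T_{i+1}})=0$ while $h_l(\boldsymbol{x}^{[n]\setminus F_{i+1}})=1$; by minimality of $l$, every gate of smaller index outside $\mathfrak E$ outputs $0$ on both tuples, and gates inside $\mathfrak E$ are constant on the current subcube, so all arguments of $g_l$ fed by gates agree on the two tuples; the fixed circuit inputs agree as well, so the change in the value of $g_l$ can only be caused by a free circuit input, which is therefore essential. You instead isolate a structural property of the basis: a non-identically-zero antichain function is essential in \emph{every} one of its variables, since flipping a single coordinate of a tuple in the support yields a comparable tuple, which the antichain property excludes from the support. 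Given that, $h_l(\boldsymbol{x}^{[n]\setminus F_{i+1}})=1$ shows $g_l\not\equiv 0$, and condition 2 of the Definition (that $e_l\notin\mathfrak E$, i.e.\ some argument of $g_l$ is wired directly to a currently free circuit input) at once produces a free essential variable. Your argument is shorter, uses neither Property~\ref{imp} nor the minimality of $l$, and as a bonus it justifies the analogous unproved assertion made in Case~\ref{case 2} of Stage 1 (that a connection to a free input gives the gate's function a free essential variable). The trade-off is that it leans on the antichain property of the basis, whereas the paper's proof of this particular lemma is basis-agnostic and reserves the antichain property for Lemmas~\ref{lemm 3} and~\ref{lemm 4}. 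One remark: the induction in your final paragraph (that gates of index below $l$ take equal values on $\boldsymbol{x}^{[n]\setminus F_i}$ and $\boldsymbol{x}^{[n]\setminus F_{i+1}}$) is not actually needed for your conclusion --- the bookkeeping rule for $\mathfrak E$ together with your auxiliary fact already yield a free essential variable distinct from the just-fixed $x_k$ --- so that portion can simply be deleted.
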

         
Consider a free essential variable of the function $g_l.$ Add the index of this variable to the set $F_{i+1}$. Thus we obtain a new set. Since we are still at the $(i+1)$-th step, denote this set again by $F_{i+1}.$ That is, we now switch to the subcube of smaller dimension. The new top tuple is yet another element of the chain $C$.

We fix to $0$ the input variables as described above in Case~\ref{case b} until one of the two following cases occurs: Property~\ref{imp} gets valid for all gates with indices $\leq i+1,$ or $F_{i+1}\cup T_{i+1}=[n]$ which means the chain is constructed.

Suppose we have fixed to $0$ yet another variable, and thus, the inputs of a gate, e.g. $e_k,$ $k\in [i+1],$ do no longer connected to any free inputs of $S.$ Then we add this gate to $\mathfrak E_{i},$ and let $\mathfrak E_{i+1}=\mathfrak E_{i}\cup\{e_k\}.$ 
 \begin{lemma}\label{lemm 2} 
Consider all tuples of the current subcube such that gates $e_l,$ where $l<k$, $e_l\in \{e_1,\ldots,e_{i+1}\} \setminus \mathfrak E_{i+1},$ output $0$ on them.  The above-described gate $e_k$ outputs $0$ on all these tuples. (See the proof in Section~\ref{lemms' proofs}.)
 \end{lemma}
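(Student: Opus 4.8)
The plan is to argue by strong induction on the gate index, restricted to the gates currently lying in $\mathfrak E_{i+1}$, and to reduce the whole statement to a single application of the defining property of an antichain function.

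First I would analyse which values are fed into $e_k$ on a tuple $\boldsymbol\gamma$ of the current subcube for which every active gate $e_l\in\{e_1,\ldots,e_{i+1}\}\setminus\mathfrak E_{i+1}$ with $l<k$ outputs $0$. Since $e_k$ has just been placed into $\mathfrak E_{i+1}$, each of its inputs that comes directly from a circuit input is now fixed, so these coordinates take the same values on every such $\boldsymbol\gamma$. The inputs coming from earlier gates split into two kinds: the active ones output $0$ by the choice of $\boldsymbol\gamma$, while for an earlier gate $e_j\in\mathfrak E_{i+1}$ with $j<k$ the induction hypothesis applies — every active gate of index $<j$ is in particular an active gate of index $<k$ and hence outputs $0$ on $\boldsymbol\gamma$, so $\boldsymbol\gamma$ lies in the set to which the statement applies for $e_j$, giving $h_j(\boldsymbol\gamma)=0$. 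Consequently the entire tuple of inputs of $e_k$ is one and the same for all admissible $\boldsymbol\gamma$; call it $\boldsymbol a$ and let $x_m$ denote the variable whose fixing to $0$ removed the last free input of $e_k$. In particular $h_k$ is constant on the whole set in question, so it suffices to evaluate $g_k(\boldsymbol a)$.

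It then remains to show $g_k(\boldsymbol a)=0$. Here I would exploit that $x_m$ was fixed to $0$ precisely because the first non-zero gate whose free essential variable $x_m$ is (which, in the situation of the lemma, is $e_k$ itself) output $1$ on the top tuple of the subcube, where every free variable, and in particular $x_m$, equals $1$. Let $\boldsymbol a^{*}$ be obtained from $\boldsymbol a$ by switching the coordinate corresponding to $x_m$ from $0$ to $1$. Repeating the computation of the inputs of $e_k$ on that top tuple — using the minimality in the definition of the first non-zero gate to see that all smaller active gates still output $0$ there, together with the induction hypothesis for the smaller gates of $\mathfrak E_{i+1}$ — shows that this top tuple feeds exactly $\boldsymbol a^{*}$ into $e_k$, so that $g_k(\boldsymbol a^{*})=1$. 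Since $\boldsymbol a<\boldsymbol a^{*}$ and $g_k$ is an antichain function, its value at the strictly smaller tuple $\boldsymbol a$ is forced to be $0$, which is the desired conclusion.

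The main obstacle I anticipate is exactly this last pinning-down step. The induction showing that all inputs of $e_k$ equal $0$ is routine, but producing a tuple above $\boldsymbol a$ on which $g_k=1$ forces one to return to the moment immediately before $x_m$ was fixed and to reconcile the two states of the construction: the sets $F_{i+1}$, $T_{i+1}$ and $\mathfrak E_{i+1}$ have changed, so one must verify that the reference tuple still supplies the same gate outputs to $e_k$, that the induction hypothesis is genuinely available at that earlier stage, and that the gate under consideration is indeed the one whose output $1$ triggered the fixing of $x_m$ (guaranteeing the comparable pair $\boldsymbol a<\boldsymbol a^{*}$). Making this matching precise is where the real work lies; everything else follows from the definition of $\mathfrak E_{i+1}$ and from Property~\ref{imp}.
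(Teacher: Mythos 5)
Your first half is sound and matches the paper's (much terser) argument: by induction over the gates already in $\mathfrak E_{i+1}$, every direct input of $e_k$ on a tuple of the kind described --- fixed circuit inputs, active gates of smaller index (which output $0$ by hypothesis), and earlier $\mathfrak E$-gates (which output $0$ by induction) --- takes one and the same value, so $h_k$ is constant on all such tuples. The gap is in your second half, and it is exactly the point you flagged as "where the real work lies": your parenthetical claim that the gate whose output of $1$ triggered the fixing of $x_m$ "is $e_k$ itself" is false in general. In the construction, $e_k$ is added to $\mathfrak E_{i+1}$ merely because the fixing exhausted its last direct connection to a free circuit input; the fixed variable is a free essential variable of the first non-zero gate $e_l$ that caused the fixing, which is in general a different gate, and $x_m$ need not even be essential for $g_k$. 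In particular, if $k<l$, then by the minimality in the definition of the first non-zero gate we have $h_k=0$ on the top tuple just before the fixing, so your computation yields $g_k(\boldsymbol a^{*})=0$ rather than $1$: no tuple above $\boldsymbol a$ in the support of $g_k$ is produced, and the antichain step gives nothing. Indeed, no such tuple need exist at all, so the statement cannot be reduced to "a single application of the antichain property."

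The correct pinning-down step (and the paper's actual one) uses the bottom tuple, not the top one, and needs no antichain argument. Just before $e_k$ joined $\mathfrak E_{i+1}$ it was an active gate, so Property~\ref{imp} gives $h_k(\boldsymbol{x}^{T_{i+1}})=0$; moreover the bottom tuple is unchanged throughout Case~\ref{case b} (only $F$ grows there) and is itself one of the tuples described in the lemma, again by Property~\ref{imp}. Your constancy argument then propagates the value $0$ from the bottom tuple to all the described tuples, which is the desired conclusion.
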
 
When everything described in Case~\ref{case b} is done the $(i+1)$-th step is finished. Note that the number of the step equals the index of the gate with which we start this step. If any gate with index smaller than $i+1$ occurs to be the first non-zero one during the step we deal with it inside the same step. Thus when the step is finished all the gates with indices $\leq i+1$ satisfy Property~\ref{imp} on the top and bottom tuples of the current subcube.
    \item \label{case c}
    Let Property~\ref{imp} be not valid on the bottom tuple of the subcube, i.e. $h_{i+1}(\boldsymbol{x}^{T_i})=1.$
    
\begin{lemma}\label{lemm 3} 
Case~\ref{case b} and Case~\ref{case c} can not occur simultaneously. (See the proof in Section~\ref{lemms' proofs}.)
\end{lemma}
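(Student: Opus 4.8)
The plan is to argue by contradiction. Suppose that Case~\ref{case b} and Case~\ref{case c} occur at the same time, i.e.\ $h_{i+1}(\boldsymbol{x}^{[n]\setminus F_i})=1$ and $h_{i+1}(\boldsymbol{x}^{T_i})=1$ simultaneously. The whole argument rests on comparing the two tuples that are fed into the antichain function $g_{i+1}$ when the circuit receives the top tuple $\boldsymbol{x}^{[n]\setminus F_i}$ and the bottom tuple $\boldsymbol{x}^{T_i}$ of the current subcube.

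First I would show that every gate $e_j\in\mathfrak E_i$ produces the same output on the top and on the bottom tuple. This follows by induction along the regular numeration. By the regularity of the numeration the inputs of $e_j$ are either fixed circuit inputs or outputs of gates with smaller index. The fixed circuit inputs agree on both tuples because $F_i$ and $T_i$ are disjoint (Property~1). An earlier gate outside $\mathfrak E_i$ outputs $0$ on both tuples by Property~\ref{imp}, and an earlier gate inside $\mathfrak E_i$ agrees on both tuples by the induction hypothesis. Hence all inputs of $e_j$ coincide on the two tuples, and so does its output.

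Next I would compare, coordinate by coordinate, the two input vectors of $g_{i+1}$. By the previous step, together with Property~\ref{imp}, every input of $e_{i+1}$ coming from a fixed circuit input or from an earlier gate takes the same value on the top and bottom tuples. The only inputs that can differ are those coming from free circuit inputs: such an input carries $1$ on the top tuple and $0$ on the bottom tuple. Since we are in Case~\ref{case 2}, the gate $e_{i+1}$ has at least one free essential variable, so at least one such free input exists. Therefore the input vector of $g_{i+1}$ on the top tuple strictly dominates, in the coordinate-wise order, its input vector on the bottom tuple.

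Finally I would invoke that $g_{i+1}$ is an antichain function, hence equals $1$ only on an antichain. The contradictory assumption forces $g_{i+1}$ to take value $1$ on two strictly comparable tuples, which is impossible. This rules out the simultaneous occurrence of the two cases. I expect the main obstacle to be the bookkeeping in the inductive step of the second paragraph: one must check that every input of $e_{i+1}$, and of each gate in $\mathfrak E_i$, falls into exactly one of the listed categories, so that no input can secretly differ on the two tuples and spoil the strict-domination conclusion.
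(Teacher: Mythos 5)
Your proof is correct and follows essentially the same route as the paper's: both show that the input vectors fed to the antichain function $g_{i+1}$ on the top tuple $\boldsymbol{x}^{[n]\setminus F_i}$ and the bottom tuple $\boldsymbol{x}^{T_i}$ agree on all coordinates coming from earlier gates and fixed inputs but differ on at least one free essential variable, so $g_{i+1}$ would take value $1$ on two distinct comparable tuples, contradicting the antichain property. The only difference is one of detail: your explicit induction showing that every gate of $\mathfrak E_i$ outputs the same value on both tuples is left implicit in the paper's terser argument.
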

		
		Case~\ref{case c} is symmetrical to~\ref{case b}, so, we do the same. Namely, we add the index $k$ to the set $T_i$, i.e. fix it to $1.$ Let $T_{i+1}=T_i\cup\{k\},$ $F_{i+1}=F_i.$ Thus we switch to a subcube of dimension $(n-|{F_{i+1}}|-|{T_{i+1}}|)$. We add the tuple $\boldsymbol{x}^{T_{i+1}}$ to the chain $C.$ Then, similarly as we do in the Сase~\ref{case b}, we add  to $T_{i+1}$ indices of variables that are essential for some other antichain functions with smaller indices. We repeat these procedure until one of the following cases take place: Property~\ref{imp} is valid for all gates $e_l,$ where $l\leq i+1,$ or all input variables are fixed to $1$ or $0$. 
    \end{enumerate}
\end{enumerate}

Note that during the whole process some of the gates may be added to $\mathfrak E_i.$

Now we proceed to Stage 2. Let us suppose that we performed the procedure described in Stage 1, and the $(n+1)$-tuple chain is constructed. Then the process is finished. Otherwise, we did $s$ steps described in Cases~\ref{case 1} and~\ref{case 2} of Stage 1. So, we considered all $s$ gates of the circuit, but we still do not have the chain of size $n+1$. We have constructed the sets $F_s$, $T_s$ and $\mathfrak E_s.$ For simplicity, from now on we will write $F,$ $T$ and $\mathfrak E$ respectively. 

\textbf{Stage 2.}
Note that the original $n$-dimensional cube is restricted to a subcube of dimension $(n-|F|-|T|) $ such that all the gates $e_1,\ldots,e_s,$ except those from the set $\mathfrak E,$ output $0$ on the top and bottom tuples of this subcube.

Then we add an arbitrary index from $[n]\setminus (F\cup T)$ to the set $T.$ That is, we fix to $1$ a certain variable. Thus we get a set $T'.$ Add the bottom tuple $\boldsymbol{x}^{T'}$ of the current subcube to the chain $C.$ Since the circuit computes the parity function, it is easy to see that $h_s(\boldsymbol{x}^T)\neq h_s(\boldsymbol{x}^{T'}).$ Therefore, there are some gates (possibly it is only $e_s$) that output different values on the inputs $\boldsymbol{x}^{T}$ and $\boldsymbol{x}^{T'}.$ Let $e_j$ be the gate with the minimal index among all the gates described. Clearly, $e_j\notin\mathfrak E,$ since otherwise it would output the same values on the above-mentioned inputs. For the gate $e_j$ we do steps from Stage 1, namely, Case~\ref{case c}. In detail, we flip the value of a free input variable assigned to the circuit input which is connected to an input of the gate $e_j.$ And fix this new value. We add a new obtained tuple to the chain $C.$ As long as $F\cup T\neq [n]$ we repeat these steps. When $F\cup T= [n]$ the chain is constructed.
\begin{remark}
Note that on the $i$-th step we input the tuples $\boldsymbol{x}^{T_i}$ and $\boldsymbol{x}^{[n]\setminus {F_i}}$ to the circuit. Since  $T_i\subseteq [n]\setminus {F_i},$ the tuples $\boldsymbol{x}^{T_i}$ and $\boldsymbol{x}^{[n]\setminus {F_i}}$ are comparable. Hence, the  constructed set of tuples is a chain, indeed.
\end{remark}
Constructing the chain $C$ we do the steps of Stage 1 until all the gates are considered. Then we switch to Stage 2. Now we stress the following. During the steps of Stage 1 we add a tuple to the chain iff we find a first non-zero gate. And we add such a tuple that this gate is not a first non-zero one on this new tuple. At Stage 2 we add every other tuple to the chain in order a gate to stop being first non-zero one. 

\begin{lemma}\label{lemm 4}
None of the gates $\{e_1,\ldots,e_s\}\setminus \mathfrak E$ can be first non-zero twice on the tuples of the chain $C.$
\end{lemma}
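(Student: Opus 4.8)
My plan is to fix a gate $e_m\in\{e_1,\dots,e_s\}\setminus\mathfrak E$ and show it is first non-zero on at most one tuple of $C$, by collapsing the behaviour of $g_m$ along the chain to a one-dimensional antichain argument. First I would record what first non-zero status forces on the inputs of $g_m$. If $e_m$ is first non-zero on a chain tuple $\boldsymbol\alpha$, then by Definition~\ref{accurate def} every $e_l\notin\mathfrak E$ with $l<m$ has $h_l(\boldsymbol\alpha)=0$, and every $e_l\in\mathfrak E$ with $l<m$ has $h_l(\boldsymbol\alpha)=0$ as well by Lemma~\ref{lemm 2}, since the smaller-index non-$\mathfrak E$ gates already vanish on $\boldsymbol\alpha$. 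Hence all inputs of $g_m$ coming from earlier gates are $0$ on $\boldsymbol\alpha$. Writing $\pi(\boldsymbol x)$ for the restriction of $\boldsymbol x$ to the coordinates wired directly into $e_m$, and $\phi$ for the function obtained from $g_m$ by fixing all its earlier-gate inputs to $0$, I get $h_m(\boldsymbol\alpha)=\phi(\pi(\boldsymbol\alpha))$ for every such $\boldsymbol\alpha$. A restriction of an antichain function to a subcube is again an antichain function, so $\phi$ is an antichain function; let $B$ be the antichain on which $\phi=1$.

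Next I would use that $C$ is a chain. Since $\pi$ is a monotone map, the values $\pi(\boldsymbol\gamma)$ over $\boldsymbol\gamma\in C$ are totally ordered, so they can meet the antichain $B$ in at most one value: there is a single point $p^{\ast}$ with $\pi(\boldsymbol\gamma)\in B\iff\pi(\boldsymbol\gamma)=p^{\ast}$. Consequently $e_m$ can be first non-zero only on tuples whose projection equals $p^{\ast}$, and these form one contiguous segment of $C$. I would then exclude the middle of the chain. At the meeting tuple $\boldsymbol x^{[n]\setminus F}$, on which Property~\ref{imp} holds, Property~\ref{imp} together with Lemma~\ref{lemm 2} makes every gate of index $<m$ vanish, so $h_m(\boldsymbol x^{[n]\setminus F})=\phi(\pi(\boldsymbol x^{[n]\setminus F}))$; but $h_m(\boldsymbol x^{[n]\setminus F})=0$ because $e_m\notin\mathfrak E$, whence $\pi(\boldsymbol x^{[n]\setminus F})\notin B$ and $\pi(\boldsymbol x^{[n]\setminus F})\neq p^{\ast}$. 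Thus the segment $\{\boldsymbol\gamma\in C:\pi(\boldsymbol\gamma)=p^{\ast}\}$ does not contain the meeting tuple, so it lies entirely in the descending part of $C$ (built from $\boldsymbol 1$) or entirely in the ascending part (built from $\boldsymbol 0$).

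Finally I would invoke the construction itself. Each time $e_m$ is declared first non-zero, the algorithm flips one free essential input wired directly into $e_m$ and takes the result as the next tuple of $C$; this alters $\pi$ in exactly one coordinate, so the new projection is comparable to $p^{\ast}$ but distinct from it and therefore leaves $B$. On the descending part this new tuple is the neighbour of $\boldsymbol\alpha$ lying just below it in $C$, which forces $\boldsymbol\alpha$ to be the lowest tuple of the segment; on the ascending part it is the neighbour just above, forcing $\boldsymbol\alpha$ to be the highest. As the segment lies on a single side and has only one such extremal tuple there, $e_m$ is first non-zero at most once, which proves Lemma~\ref{lemm 4}. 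The main obstacle is this last bookkeeping step: one must verify that the tuple produced by the flip really is the adjacent tuple of $C$ and that its projection leaves $B$, and above all that the segment cannot straddle the meeting tuple so as to present both an extremal descending tuple and an extremal ascending tuple — this is exactly what the exclusion $\pi(\boldsymbol x^{[n]\setminus F})\neq p^{\ast}$ rules out.
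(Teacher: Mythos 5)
Your first two steps are sound and are, in substance, the paper's own argument in different packaging: the observation that the totally ordered set $\{\pi(\boldsymbol\gamma):\boldsymbol\gamma\in C\}$ can meet the antichain $B$ in at most one point is exactly the paper's antichain contradiction, and your use of Lemma~\ref{lemm 2} to zero out the earlier-gate inputs of $g_m$ makes explicit a point the paper leaves implicit. The genuine divergence is the endgame. The paper argues in \emph{construction} order: if $e_k$ is first non-zero on a chain tuple, the algorithm immediately flips a free essential variable of $g_k$ \emph{and fixes it}, so every tuple constructed afterwards (wherever it lands in the chain order) differs from that tuple in this coordinate; since on any second first-non-zero tuple the earlier-gate inputs of $g_k$ are again all $0$, and any two chain tuples are comparable, $g_k$ would take the value $1$ at two distinct comparable points --- contradicting the antichain property. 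That is the whole proof, with no case analysis. You instead argue in \emph{chain} order, which is what forces the segment, the two sides, and the meeting tuple into the picture.

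That bookkeeping contains the one real flaw. You claim Property~\ref{imp} holds at the meeting tuple $\boldsymbol x^{[n]\setminus F}$, hence $h_m(\boldsymbol x^{[n]\setminus F})=0$ for every $e_m\notin\mathfrak E$, hence $\pi(\boldsymbol x^{[n]\setminus F})\neq p^{\ast}$. The construction does not guarantee this: Property~\ref{imp} is only restored at the end of a completed step, and both stages may instead terminate because $F\cup T=[n]$, i.e.\ because the free variables run out in the middle of a cascade of flips. The tuple added at that moment \emph{is} the meeting tuple, and no further flip is available to kill gates that output $1$ on it. In fact, whenever the meeting tuple has odd weight the circuit outputs $1$ on it, so by the Lemma~\ref{lemm 2} induction some gate outside $\mathfrak E_s$ outputs $1$ on it, and your exclusion fails outright. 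Fortunately the exclusion is not needed: by your own reasoning a descending first-non-zero tuple must be the \emph{minimum} of the segment and an ascending one its \emph{maximum}, so a segment straddling the meeting tuple admits no first-non-zero tuples at all except possibly the meeting tuple itself, and ``at most one'' survives in every case. Simpler still, drop chain order entirely and use the paper's construction-order observation: once the flipped coordinate is fixed, no subsequently constructed tuple can project to $p^{\ast}$ again, which finishes the proof without segments, sides, or the meeting tuple.
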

\begin{proof}[Proof of Lemma~\ref{lemm 4}]

Assume on the contrary, for two different $\boldsymbol{x}^P, \boldsymbol{x}^{P'} \in C$ there exists index $k$ such that $h_k(\boldsymbol{x}^P)=h_k(\boldsymbol{x}^{P'})=1,$ and for all $e_j \in \{e_1,\ldots, e_s\} \setminus \mathfrak E,$ where $j<k:$ $h_j(\boldsymbol{x}^P)=h_j(\boldsymbol{x}^{P'})=0.$
Consider the antichain function $g_k$ corresponding to the gate $e_k.$ By construction, without loss of generality, assume the tuple $\boldsymbol{x}^P$ is obtained from the tuple $\boldsymbol{x}^{P'}.$ Recall that we obtain it by flipping the value of at least one essential variable of the finction $g_k.$ This means that the function $g_k$ outputs $1$ on the two comparable tuples. This contradicts the fact that $g_k$ is an antichain function. 
\end{proof}

Therefore we have constructed the $(n+1)$-tuple chain $C$ such that for at least $\lfloor \frac{n+1}{2} \rfloor$ tuples there exist first non-zero gates and no gate is a first non-zero twice on the tuples of $C.$ This concludes the proof of inequality $L(p_n)\geq\lfloor \frac{n+1}{2} \rfloor.$ Proposition~\ref{parity} is proved.
\subsubsection{Proofs of Lemmas}\label{lemms' proofs}

\begin{proof}[Proof of Lemma~\ref{lemm 1}.]
We have $h_l(\boldsymbol{x}^{[n]\setminus F_{i+1}})=1,$ $h_l(\boldsymbol{x}^{T_{i+1}})=0.$
What $g_l$ outputs may depend on values on some circuit inputs and on the values that output the following gates:
\begin{enumerate}[{1)}]
\item $e_j,$ where $j<l$ and $e_l \in \{e_1,\ldots,e_{i+1}\} \setminus \mathfrak E_{i+1},$
\item $e_r,$ where $r<l$ and $e_r\in \mathfrak E_{i+1}.$
\end{enumerate}

The outputs of the gates of type 2 determine the outputs of the gates of type 1. Since $l$ is the minimal index, all gates $e_j$ of type 1 are equal to $0$ on the top and the bottom tuples of the current cube. Thus we obtain that $g_l$ outputs different values on the two tuples with the following property: in these tuples the components corresponding to the outputs of the gates with smaller indices are equal. Thus the other components corresponding to the input variables should be different. That is, there is at least one free variable among the input variables such that it is essential for the function $g_l.$
\end{proof}
\begin{proof}[Proof of Lemma~\ref{lemm 2}.] 
The function $g_k$ computed on the output of the gate $e_k$ has no free variables. Thus on all above-mentioned inputs the gate $e_k$ outputs the same value. This gate outputs $0$ on the bottom tuple of the cube. Therefore, it outputs $0$ on all the inputs described.
\end{proof}
\begin{proof}[Proof of Lemma~\ref{lemm 3}.]
Assume on the contrary that Cases~\ref{case b} and~\ref{case c} occur simultaneously. Functions $h_j$ such that $j<i+1$ and $e_j \notin \mathfrak E_i$ output $0$ on the tuples $\boldsymbol{x}^{[n]\setminus F_i}$ and $\boldsymbol{x}^{T_{i+1}}.$ These two tuples are comparable. By the construction, the function $g_{i+1}$ has an essential variable whose value is different in the tuples $\boldsymbol{x}^{[n]\setminus F_i}$ and $\boldsymbol{x}^{T_{i+1}}.$ Hence we have shown that the function $g_{i+1}$ outputs $1$ on two comparable tuples. This contradicts the fact that $g_{i+1}$ is an antichain function.
\end{proof}

\subsection{Lower Bound for Majority Function}\label{lower majority}

\begin{prep}\label{majority}
For all natural $n,$ $L(m_n)\geq \left\lfloor \frac{n}{2} \right \rfloor +1.$
\end{prep}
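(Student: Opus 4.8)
The plan is to adapt the chain-construction argument from Proposition~\ref{parity} to the majority function, exploiting the symmetry between Case~\ref{case b} and Case~\ref{case c} that was established in Lemma~\ref{lemm 3}. As before, I would construct a chain $C$ of $n+1$ comparable tuples, starting with $\boldsymbol 0$ and $\boldsymbol 1$, and run Stage~1 verbatim: the construction of the sets $F_i$, $T_i$, $\mathfrak E_i$ and the notion of a first non-zero gate depend only on the circuit structure and on the antichain property of the gates, not on which particular function the circuit computes. Lemmas~\ref{lemm 1}--\ref{lemm 4} are therefore available unchanged, and in particular Lemma~\ref{lemm 4} still guarantees that no gate is first non-zero twice on the tuples of $C$.

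The only place where the computed function enters is Stage~2, where for the parity function one uses $h_s(\boldsymbol{x}^T)\neq h_s(\boldsymbol{x}^{T'})$ to force a fresh first non-zero gate on each newly added tuple. For majority I would instead argue along the chain that the output value of the circuit changes at least $\lfloor n/2\rfloor + 1$ times. Concretely, the constructed chain $\boldsymbol 0 = \boldsymbol{x}^{P_0} \prec \boldsymbol{x}^{P_1} \prec \cdots \prec \boldsymbol{x}^{P_n} = \boldsymbol 1$ passes through all Hamming weights from $0$ to $n$, one per step, since each step flips exactly one free variable from $0$ to $1$ (counting the descent from $\boldsymbol 1$ symmetrically). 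Because $m_n$ is monotone and equals $1$ iff the weight is at least $\lceil n/2\rceil$, there is a unique weight threshold at which the value jumps from $0$ to $1$; I would combine this with the two-sided construction (ascending from $\boldsymbol 0$, descending from $\boldsymbol 1$) to count first non-zero gates on roughly half the tuples, now obtaining the bound $\lfloor n/2\rfloor + 1$ rather than $\lfloor (n+1)/2\rfloor$. The extra $+1$ relative to parity comes from the fact that both endpoints $\boldsymbol 0$ and $\boldsymbol 1$ lie on the same side of the threshold only for parity; for majority the threshold sits asymmetrically, yielding one additional transition that must be witnessed by a first non-zero gate.

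I would then invoke the key quantitative step: each transition in the circuit's output value, or more precisely each tuple on which a first non-zero gate appears, must be charged to a distinct gate of $\{e_1,\ldots,e_s\}\setminus\mathfrak E$ by Lemma~\ref{lemm 4}, so the number of such tuples is a lower bound on $L(S)$. The computation reduces to showing that at least $\lfloor n/2\rfloor + 1$ of the $n+1$ tuples of $C$ carry a first non-zero gate.

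The main obstacle I anticipate is verifying the exact count $\lfloor n/2\rfloor + 1$ as opposed to merely $\lfloor(n+1)/2\rfloor$. For parity every other tuple of $C$ witnesses a value change, giving the cleaner half-bound; for majority I must check carefully that the asymmetric placement of the threshold $\lceil n/2\rceil$ does not cost a transition at the boundary and, conversely, that the two endpoints $\boldsymbol 0$ and $\boldsymbol 1$ together contribute the extra unit. This is essentially a parity-of-$n$ bookkeeping argument over the weights visited by the chain, and getting the floors to match for both even and odd $n$ simultaneously is where the care is needed.
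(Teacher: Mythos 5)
Your function-independent scaffolding is fine and matches the paper: Stage~1, the sets $F_i,T_i,\mathfrak E_i$, and Lemmas~\ref{lemm 1}--\ref{lemm 4} are reused verbatim in the paper's proof of Proposition~\ref{majority}, and the final step is indeed to count chain tuples carrying first non-zero gates and invoke Lemma~\ref{lemm 4}. But the part of the argument that is actually specific to majority is missing, and what you sketch in its place does not work. First, your central quantitative claim --- that ``the output value of the circuit changes at least $\lfloor n/2\rfloor+1$ times'' along the chain --- is false: $m_n$ is monotone, so along any chain the circuit's output changes exactly once, and counting output transitions can never give more than a constant. Second, you never state a criterion for \emph{which} chain tuples carry a first non-zero gate. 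For parity the criterion was ``the circuit outputs $1$''; for majority that criterion is not available, since (as the paper's case $a=1$ shows) the circuit may output $1$ on every high-weight chain tuple while no gate outside $\mathfrak E$ outputs $1$ there at all. Third, your count presupposes a completed chain through all $n+1$ weights, but the only mechanism for extending the chain past Stage~1 --- parity's Stage~2 --- relies on $h_s(\boldsymbol{x}^{T})\neq h_s(\boldsymbol{x}^{T'})$ after a single flip, which is precisely what majority fails to guarantee; you offer no substitute.

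The paper closes this gap differently and, notably, never completes the chain. If Stage~1 ends with a partial chain of size $l<n+1$, let $\boldsymbol\alpha$ and $\boldsymbol\beta$ be the top and bottom tuples of the leftover subcube; by construction every gate outputs the same value on both, so $h_s(\boldsymbol\alpha)=h_s(\boldsymbol\beta)=a$, even though these tuples have very different weights. The correct criterion is then: every chain tuple on which the circuit outputs $1-a$ carries a first non-zero gate (otherwise all gates outside $\mathfrak E$ would output $0$ there, forcing the circuit output to equal $a$). If $a=0$, the descending part of the chain contains all weights $\lceil n/2\rceil,\dots,n$, giving $\lfloor n/2\rfloor+1$ such tuples, and Lemma~\ref{lemm 4} finishes the proof. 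If $a=1$, this criterion yields only the $\lceil n/2\rceil$ low-weight tuples, and the extra $+1$ comes from an observation your proposal lacks entirely: on those tuples the circuit outputs $0$, so $e_s$ cannot be any of the first non-zero gates, whence $s\geq\lceil n/2\rceil+1\geq\lfloor n/2\rfloor+1$. Your explanation of the $+1$ via ``endpoints lying on the same side of the threshold'' has no counterpart in a valid argument; without the common value $a$, the $1-a$ criterion, and the $e_s$ observation, the bound $\lfloor n/2\rfloor+1$ cannot be established along the lines you describe.
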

\begin{proof}
The beginning of the proof is the same as that for the parity function in Section~\ref{lower proof}. That is, for a given circuit computing the $n$-variable majority function we construct an $(n+1)$-tuple chain with the same properties. The first stage of the constructing process is exactly the same as in the proof of Proposition~\ref{parity}.
 
 Then suppose we have partially constructed a chain and obtained its part of size $l.$ If $l=n+1$ then Proposition~\ref{majority} is proved. Suppose $l< n+1.$ The starting $n$-dimensional cube is restricted to an $(n-l+2)$-dimensional subcube. Let $\boldsymbol{\alpha}=(\alpha_1,\ldots,\alpha_n)$ and $\boldsymbol{\beta}=(\beta_1,\ldots,\beta_n)$ be respectively the top and the bottom tuples of this subcube. Recall that the last gate $e_s$ outputs the value of the majority function on a given input. By the construction of the chain, each gate outputs the same value on both inputs $\boldsymbol\alpha$ and $\boldsymbol\beta.$ So clearly, $h_s(\boldsymbol{\alpha})=h_s(\boldsymbol{\beta}).$ 

Let $a=h_s(\boldsymbol{\alpha})=h_s(\boldsymbol{\beta}).$ It is easy to see that if we input a tuple from the chain, and the circuit outputs the value $1-a$ on that input then there is a first non-zero gate on this tuple. If $a=0$ there are $\left \lfloor \frac{n}{2}\right\rfloor +1$ such tuples; in this case Proposition~\ref{majority} is proved. If $a=1$ there are $\left\lceil \frac{n}{2}\right\rceil$ such tuples. Therefore, there are exactly the same number of first non-zero gates in the circuit corresponding to these tuples. None of these non-zero gates is the gate $e_s.$ Thus the total number of the circuit gates is at least $\left\lceil \frac{n}{2} \right\rceil +1\geq \left\lfloor \frac{n}{2}\right\rfloor +1.$
\end{proof}
\subsection{Upper Bounds for Majority and Parity Functions}\label{upper proof}
In this section we obtain the upper bounds for the parity and majority functions.
\begin{prep}\label{prep 3}
For all natural $n,$\,
$
    %\left \lceil\frac{n^{1/6}}{4c}\right\rceil \leq
    L(m_n) \leq \left\lfloor \,
    \frac{n}{2} \, \right \rfloor +1.
$
%где $c=(2/\pi)^{1/2}.$
\end{prep}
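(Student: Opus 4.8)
The plan is to realize $m_n$ as a disjunction of the $\lfloor n/2\rfloor+1$ uppermost slices of the cube, exploiting two facts: each slice is itself an antichain function, so it costs a single gate, and exactly one of the slices can be absorbed into the combining gate, so that the whole circuit uses one gate per remaining slice plus one output gate. Write $w(\boldsymbol{x})=x_1+\cdots+x_n$ and $t=\lceil n/2\rceil$, so $m_n(\boldsymbol{x})=1$ iff $w(\boldsymbol{x})\ge t$. For $t\le k\le n$ let $\chi_k$ be the function equal to $1$ exactly on the tuples of weight $k$. Since any two distinct tuples of the same weight are incomparable, the $1$-set of $\chi_k$ is an antichain, so each $\chi_k$ is a single $AC$ gate; moreover $m_n=\chi_t\vee\chi_{t+1}\vee\cdots\vee\chi_n$, which has $n-t+1=\lfloor n/2\rfloor+1$ terms.

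First I would take, for each $k$ with $t\le k\le n-1$, a gate computing $\chi_k(x_1,\dots,x_n)$ directly from the circuit inputs; this uses $n-t=\lfloor n/2\rfloor$ gates. I would deliberately \emph{not} compute the top slice $\chi_n$ (the indicator of the single point $\boldsymbol{1}$) by a separate gate, but instead fold it into the output gate. That output gate $e$ reads the $n$ circuit inputs together with the $\lfloor n/2\rfloor$ slice wires $b_k:=\chi_k(\boldsymbol{x})$, $t\le k\le n-1$, and I would define it to output $1$ exactly on the set $\mathcal{A}$ consisting of all tuples $(\boldsymbol{x},\boldsymbol{b})$ for which $w(\boldsymbol{x})=k$ and $\boldsymbol{b}$ is the auxiliary tuple whose coordinate corresponding to slice $k$ equals $1$ and all others $0$ (for some $t\le k\le n-1$), together with the single tuple $(\boldsymbol{1},\boldsymbol{0})$. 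On the inputs that actually occur the slices are pairwise disjoint, so on input $\boldsymbol{x}$ the auxiliary tuple is the indicator of $w(\boldsymbol{x})$ when $t\le w(\boldsymbol{x})\le n-1$ and is $\boldsymbol{0}$ otherwise; a short case check over $w(\boldsymbol{x})<t$, $t\le w(\boldsymbol{x})\le n-1$, and $w(\boldsymbol{x})=n$ then confirms that $e$ outputs $m_n(\boldsymbol{x})$ for every $\boldsymbol{x}$.

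The main thing to verify, and the step I expect to be the only real obstacle, is that $\mathcal{A}$ is genuinely an antichain in the whole cube $\{0,1\}^{n+\lfloor n/2\rfloor}$, and not merely correct on the inputs that occur: the antichain property of a gate must hold over its entire domain, where unreachable configurations are also present. Two tuples of $\mathcal{A}$ coming from different weights $k\ne k'$ carry incomparable auxiliary parts (distinct unit tuples) and are therefore incomparable; two tuples from the same $k$ share the auxiliary part and have equal-weight $\boldsymbol{x}$-parts, hence are incomparable unless equal; and the folded point $(\boldsymbol{1},\boldsymbol{0})$ is incomparable to every $(\boldsymbol{x},\boldsymbol{b})\in\mathcal{A}$ with $\boldsymbol{b}\ne\boldsymbol{0}$, since comparability would force $\boldsymbol{x}=\boldsymbol{1}$ (whence $w(\boldsymbol{x})=n\notin\{t,\dots,n-1\}$) in one direction and $\boldsymbol{b}\le\boldsymbol{0}$ in the other.

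Once this is established, every gate is a valid $AC$ gate, the circuit computes $m_n$, and its gate count is $\lfloor n/2\rfloor+1$, giving $L(m_n)\le\lfloor n/2\rfloor+1$. To close, I would check the boundary cases $n=1$ and $n=2$ against this description: for $n=1$ the slice range is empty and the output gate is just $x_1$, while for $n=2$ the output gate's $1$-set is exactly $\{(0,1,1),(1,0,1),(1,1,0)\}$, confirming that the construction degrades correctly at small $n$.
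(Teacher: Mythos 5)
Your proof is correct and is essentially the paper's own construction: you build $\lfloor n/2\rfloor$ antichain gates for the weight-$k$ slices with $\lceil n/2\rceil\le k\le n-1$, and fold the top slice into a single output gate whose $1$-set (your $\mathcal{A}$) coincides with the paper's $M_1\sqcup M_2$, with the same antichain verification. The only differences are notational (variable ordering, $\chi_k$ versus $h_t$) plus your added small-$n$ sanity checks, so there is nothing further to reconcile.
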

\begin{proof}
Consider $n$-dimensional boolean cube. Recall that the majority function of $n$ variables is equal to $1$ only on tuples that have at least $\left \lceil \frac{n}{2} \right \rceil$ ones. Denote by $M$ the support of the function $m_n$.

Let us call by the \textit{t-th layer} of the boolean cube the set of $n$-tuples with $t$ ones. For all $t=0,1,\ldots, n$ the t-th layer is an antichain over the boolean cube. Denote these antichains by $A_t.$ Clearly, $M=\bigsqcup_i A_i,$ where $i={\left \lceil
\frac{n}{2} \right \rceil}, {\left \lceil  \frac{n}{2} \right \rceil + 1}, {\left \lceil
\frac{n}{2} \right \rceil + 2}, \ldots, n.$ 

Define the functions  $h_t(x),$ where $x=(x_1, \ldots, x_n),$ $t=0, 1,\ldots,n,$ as follows:
$$h_t(x)= 1 \Leftrightarrow \sum\limits_{k=1}^n {x_k}= {t},
$$

It is easy to see that each $h_t$ is a characteristic function of an antichain $A_t.$

Let us denote $(y_{{\left \lceil  \frac{n}{2} \right \rceil}}, \ldots, y_{n-1})$ by $\boldsymbol y$ and $(y_{{\left \lceil  \frac{n}{2} \right \rceil}}, \ldots, y_{n-1},x_1,\ldots,x_n)$ by $(\boldsymbol y, \boldsymbol x).$ 

Let function $g(\boldsymbol y, \boldsymbol x)$ be equal to $1$ iff $(\boldsymbol y,\boldsymbol x)\in M_1\sqcup M_2,$ where $M_1,$ $M_2$ are defined by the following
$$\mbox{${(\boldsymbol y,\boldsymbol x})\in M_1\iff$}
\begin{cases}\label{nabor1}
    \exists\, j, \text{ such that } y_j=1,\\
    \forall i\neq j\,\, y_i=0,\\
    \sum\limits_{k=1}^{n} x_k=j,\\
\end{cases}
\mbox{$(\boldsymbol y,\boldsymbol x)\in M_2 \iff$}
\begin{cases} \label{nabor2}
    \forall\, i\,\,y_i=0,\\
    \sum\limits_{j=1}^{n} x_j= n.\\
\end{cases}
$$

It can be easily checked that $g\in AC.$ Indeed, consider two different arbitraty tuples from the support of the function $g:$ $(\boldsymbol {y_1}, \boldsymbol {x_1}) \neq (\boldsymbol {y_2}, \boldsymbol {x_2}).$ If $(\boldsymbol {y_1}, \boldsymbol {x_1}), (\boldsymbol {y_2}, \boldsymbol {x_2})\in M_1$ and $\boldsymbol y_1\neq \boldsymbol y_2$ then tuples are not comparable. If $(\boldsymbol {y_1}, \boldsymbol {x_1}), (\boldsymbol {y_2}, \boldsymbol {x_2})\in M_1$ and $\boldsymbol y_1= \boldsymbol y_2$ then the positions of the components, where $j$ ones of the tuple $\boldsymbol x_1$ occur, do not match those of the tuple $\boldsymbol x_2$. Thus the tuples $(\boldsymbol {y_1}, \boldsymbol {x_1})$ and $(\boldsymbol {y_2}, \boldsymbol {x_2})$ are not comparable. If, without loss of generality, $(\boldsymbol {y_1}, \boldsymbol {x_1})\in M_1,$ $(\boldsymbol {y_2}, \boldsymbol {x_2})\in M_2$ then $\boldsymbol y_1> \boldsymbol y_2$ and $\boldsymbol x_2>\boldsymbol x_1.$ Therefore, the tuples $(\boldsymbol {y_1}, \boldsymbol {x_1})$ and $(\boldsymbol {y_2}, \boldsymbol {x_2})$ are not comparable. Since the set $M_2$ contains only one tuple, the case when $(\boldsymbol {y_1}, \boldsymbol {x_1}), (\boldsymbol {y_2}, \boldsymbol {x_2})\in M_2$ is impossible.

Further, let us compute the majority function $m_n(\boldsymbol x)$ as follows:
\begin{equation}\label{eq}
m_n(\boldsymbol x)=g\left(h_{\left \lceil  \frac{n}{2} \right \rceil}(\boldsymbol x), \ldots,h_{n-1}(\boldsymbol x),x_1,\ldots,x_n\right).
\end{equation}

We will check now the equation~\eqref{eq}. Consider arbitrary tuple $\boldsymbol \alpha =(\alpha_1,\ldots,\alpha_n)$. There exist two possibilities:
\begin{enumerate}[{1)}]
\item $m_n(\boldsymbol{\alpha})=1.$ Thus $\sum\limits_{j=1}^{n} x_n\geq \left \lceil \frac{n}{2} \right \rceil.$ If $\left \lceil \frac{n}{2} \right \rceil \leq \sum\limits_{j=1}^{n} x_j\leq n-1$ then there exists $t=j$ such that $h_j (\boldsymbol \alpha)=1$ and for all $k\neq j:$ $h_k (\boldsymbol \alpha)=0$. Hence, the tuple $(\boldsymbol y,\boldsymbol \alpha),$ where $y_i=h_i(\boldsymbol \alpha),$ is contained in the set $M_1.$ Thus $g(\boldsymbol y,\boldsymbol \alpha)=1.$ If $\sum\limits_{j=1}^{n} x_j=n$ then for all $i\in\{\left \lceil \frac{n}{2} \right \rceil,\ldots, n-1\}:$ $y_i=h_i(\boldsymbol \alpha)=0.$ Therefore, $(\boldsymbol y,\boldsymbol \alpha)\in M_2,$ and then $g(\boldsymbol y,\boldsymbol \alpha)=1.$

\item $m_n(\boldsymbol \alpha)=0.$ Then for all $t\in\{\left \lceil \frac{n}{2} \right \rceil,\ldots,n-1\}:$ $h_t(\boldsymbol \alpha)=0.$ Thus, for $y_t=h_t(\boldsymbol \alpha),$ the tuple $(\boldsymbol y,\boldsymbol \alpha)$ is not contained in the sets $M_1$ and $M_2.$ Therefore, $g(\boldsymbol y,\boldsymbol \alpha)=0.$

\end{enumerate}

Thus, by~\eqref{eq}, we have shown the upper bound $\left \lfloor \, \frac{n}{2} \, \right \rfloor +1$ for the $n$-variable majority function.
\end{proof}
\begin{prep}\label{prep 4}
For all natural $n,$\,
$
    %\left \lceil\frac{n^{1/6}}{4c}\right\rceil \leq
    L(p_n) \leq \left\lfloor\frac{n+1}{2}\right \rfloor.
$
%где $c=(2/\pi)^{1/2}.$
\end{prep}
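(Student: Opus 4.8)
The plan is to mirror the construction used for the majority function in Proposition~\ref{prep 3}, replacing the threshold support by the parity support and choosing the auxiliary antichains accordingly. Recall that the parity function $p_n$ is the characteristic function of the union of the odd layers of the cube, that is $p_n=1$ exactly on those tuples $\boldsymbol x$ with $\sum_k x_k$ odd. Each layer $A_t$ is an antichain, so I would again introduce the layer-indicator functions $h_t(\boldsymbol x)$ which equal $1$ iff $\sum_{k=1}^n x_k=t$; each $h_t\in AC$ as verified in the previous proof. The task is then to combine the relevant odd-layer indicators through a single antichain function $g$ so that the total gate count is $\lfloor\frac{n+1}{2}\rfloor$.

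First I would count how many gates the layer-indicators contribute. There are $\lfloor\frac{n+1}{2}\rfloor$ odd values of $t$ in $\{1,\ldots,n\}$, namely $t=1,3,5,\ldots$, but as in Proposition~\ref{prep 3} one of these can be absorbed into the combining gate $g$ rather than computed separately. Concretely, I would feed the indicators for the odd layers except one distinguished layer (say the top odd layer) as the $\boldsymbol y$-inputs to $g$, and wire the original inputs $x_1,\ldots,x_n$ directly into $g$ as well, exactly as in equation~\eqref{eq}. The combining function $g(\boldsymbol y,\boldsymbol x)$ should output $1$ precisely when either exactly one $y_j=1$ (which certifies that $\boldsymbol x$ lies in the corresponding odd layer) or all $y_i=0$ together with $\sum_k x_k$ equal to the one omitted odd value. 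This is the direct analogue of the sets $M_1$ and $M_2$: the $M_1$-part handles the odd layers whose indicators are supplied, and the $M_2$-part handles the single omitted odd layer using the raw $\boldsymbol x$-inputs, thereby saving one gate and landing at $\lfloor\frac{n+1}{2}\rfloor$ total.

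The key verification steps, carried out in the same style as before, are twofold. I would first check that $g\in AC$, i.e. that its support is an antichain: two distinct tuples in the $M_1$-part with different $\boldsymbol y$ are incomparable because the $\boldsymbol y$-blocks differ in incomparable single-$1$ positions, while if their $\boldsymbol y$ agree then the $\boldsymbol x$-blocks sit in the same layer and are incomparable; a tuple in $M_1$ versus the unique tuple in $M_2$ has strictly larger $\boldsymbol y$ but strictly smaller $\boldsymbol x$, hence incomparability; and $M_2$ contributes only one tuple. Second, I would verify the identity $p_n(\boldsymbol x)=g(h_{\cdot}(\boldsymbol x),\ldots,x_1,\ldots,x_n)$ by a case split on whether $\sum_k x_k$ is odd, matching each odd layer to either the $M_1$-branch (for the supplied indicators) or the $M_2$-branch (for the omitted layer), and checking that every even-weight input forces all $y_i=0$ and fails the $M_2$ condition so that $g$ outputs $0$.

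The main obstacle I anticipate is the bookkeeping of the floor function: I must confirm that exactly $\lfloor\frac{n+1}{2}\rfloor-1$ odd-layer indicators are used as $\boldsymbol y$-inputs plus the single combining gate $g$, yielding precisely $\lfloor\frac{n+1}{2}\rfloor$ gates, and that this works uniformly for both parities of $n$ (in particular that the distinguished omitted layer can always be chosen to be a genuine odd layer and that its weight is recoverable from the $\boldsymbol x$-inputs alone). Beyond this counting subtlety, the argument is a faithful transcription of the majority-function construction, so the incomparability and correctness checks should go through without essentially new ideas.
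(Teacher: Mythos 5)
Your proposal is correct and follows essentially the same route as the paper, whose own proof is just a sketch deferring to the majority construction: supply the indicators $h_t$ of all odd layers except one, and absorb the omitted layer into the combining antichain function $g$ via the $M_2$-part, giving $\left\lfloor\frac{n+1}{2}\right\rfloor$ gates in total. One small correction to your antichain check: when $n$ is even the omitted top odd layer is layer $n-1$, so $M_2$ is a whole layer rather than a single tuple; incomparability still holds because any two tuples of $M_2$ lie in one layer, and an $M_1$-tuple can never dominate an $M_2$-tuple since its $\boldsymbol x$-block has strictly smaller weight while the reverse domination is blocked by the $\boldsymbol y$-blocks.
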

\begin{proof}
In the same way as in the proof of Proposition~\ref{prep 3} we can compute the $n$-variable parity function by an $AC$ circuit of complexity $\left\lfloor \,\frac{n+1}{2} \, \right \rfloor.$ To do that, we define a similar function $g$ as in the proof of Proposition~\ref{prep 3}. Into this function we input the variables as well as the functions $h_t,$ where indices $t$ are equal to the numbers of certain alternating layers of the boolean cube. Namely, these are numbers of such alternating layers whose disjoint union gives the support of the parity function.
\end{proof}
The results of this paper can be summarized is the following theorems. 
\begin{theorem}~\label{th1}
For all natural $n,$ 
$$L(p_n)=\left\lfloor \frac{n+1}{2} \right\rfloor,\, L(m_n)=\left\lfloor \frac{n}{2}\right \rfloor +1.$$
\end{theorem}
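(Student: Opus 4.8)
The plan is to combine the four preceding propositions directly, since Theorem~\ref{th1} is nothing more than the conjunction of matching lower and upper bounds on $L(p_n)$ and $L(m_n)$. First I would recall that Proposition~\ref{parity} gives $L(p_n)\geq \left\lfloor \frac{n+1}{2}\right\rfloor$ and Proposition~\ref{prep 4} gives $L(p_n)\leq \left\lfloor \frac{n+1}{2}\right\rfloor$; squeezing these two inequalities yields $L(p_n)=\left\lfloor \frac{n+1}{2}\right\rfloor$ exactly. Symmetrically, Proposition~\ref{majority} gives $L(m_n)\geq \left\lfloor \frac{n}{2}\right\rfloor +1$ and Proposition~\ref{prep 3} gives $L(m_n)\leq \left\lfloor \frac{n}{2}\right\rfloor +1$, so the majority complexity is pinned to $\left\lfloor \frac{n}{2}\right\rfloor +1$ as well.

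The one subtlety I would flag is a bookkeeping mismatch in how the bounds are phrased in the introduction versus the proof sections. In the abstract and introduction the author states $L(m_n)=\left\lfloor \frac{n}{2}\right\rfloor+1$ with upper bound route $\left\lfloor\frac{n}{2}\right\rfloor+1$ for majority and $\left\lfloor\frac{n+1}{2}\right\rfloor$ for parity, yet at the very end of the introduction the upper bounds are listed with the functions apparently swapped. I would therefore open the proof by carefully verifying that Proposition~\ref{prep 3} indeed bounds $m_n$ from above by $\left\lfloor\frac{n}{2}\right\rfloor+1$ and Proposition~\ref{prep 4} bounds $p_n$ from above by $\left\lfloor\frac{n+1}{2}\right\rfloor$, matching the respective lower bounds. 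Once this alignment is confirmed (which it is, reading the actual proposition statements), the theorem follows with no further work.

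Concretely, the proof is a two-line argument. For the parity function, from Propositions~\ref{parity} and~\ref{prep 4} we obtain
\[
\left\lfloor \frac{n+1}{2}\right\rfloor \leq L(p_n) \leq \left\lfloor \frac{n+1}{2}\right\rfloor,
\]
hence $L(p_n)=\left\lfloor \frac{n+1}{2}\right\rfloor$. For the majority function, from Propositions~\ref{majority} and~\ref{prep 3} we obtain
\[
\left\lfloor \frac{n}{2}\right\rfloor +1 \leq L(m_n) \leq \left\lfloor \frac{n}{2}\right\rfloor +1,
\]
hence $L(m_n)=\left\lfloor \frac{n}{2}\right\rfloor +1$. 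This proves Theorem~\ref{th1}.

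Since all the genuine content lives in the four propositions, there is no real obstacle at the level of the theorem itself; the only thing demanding care is making sure the floor-function expressions are transcribed without the index swap that appears in the prose. I would expect the main substantive work of the paper to remain entirely in the lower-bound construction of Section~\ref{lower proof} (the chain argument and the first-non-zero-gate machinery) and in the explicit antichain-function constructions of Section~\ref{upper proof}; the final theorem is purely a collection step.
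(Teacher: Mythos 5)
Your proposal is correct and matches the paper's own proof, which likewise just combines the matching lower and upper bounds (Propositions~\ref{parity} and~\ref{prep 4} for $p_n$, Propositions~\ref{majority} and~\ref{prep 3} for $m_n$). In fact your pairing of the propositions is the careful one: the paper's proof text cites Proposition~\ref{prep 3} for parity and Proposition~\ref{prep 4} for majority (a harmless citation swap, echoing the swapped phrasing in the introduction that you flagged), while your version states the pairing as it should be.
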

\begin{proof}
Propositions~\ref{parity},~\ref{prep 3} proved in Sections~\ref{lower proof},~\ref{upper proof} together give the first equality for the parity function. Propositions~\ref{majority},~\ref{prep 4} proved in Sections~\ref{lower majority},~\ref{upper proof}, in turn, show the second equality for the majority function.
\end{proof}
These results imply the following theorem.

\begin{theorem}~\label{th2}
For all natural $n,$ $ L(n)=\Theta(n).$ 
\end{theorem}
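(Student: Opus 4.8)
The plan is to prove that $L(n) = \Theta(n)$ by establishing matching upper and lower bounds, both linear in $n$. The upper bound $L(n) \leq n$ is already available from the prior work cited in the introduction (\cite{pod2}), so it suffices to combine it with a matching lower bound of order $n$. For the lower bound, I would invoke Theorem~\ref{th1}, which gives exact formulas for the complexity of the parity and majority functions. In particular, $L(p_n) = \left\lfloor \frac{n+1}{2} \right\rfloor$ shows that there exists at least one $n$-variable boolean function (namely the parity function) whose $AC$-circuit complexity is at least $\left\lfloor \frac{n+1}{2} \right\rfloor = \Omega(n)$.

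The key step is to recall the definition of the Shannon function $L(n)$ as the minimum number of gates sufficient to compute \emph{any} function of $n$ variables. By definition, $L(n) = \max_f L(f)$, where the maximum is taken over all boolean functions $f$ of $n$ variables. Therefore $L(n) \geq L(p_n)$, since the parity function is one particular such function. Combining this with the exact value from Theorem~\ref{th1} yields
\begin{equation*}
L(n) \geq L(p_n) = \left\lfloor \frac{n+1}{2} \right\rfloor \geq \frac{n}{2},
\end{equation*}
which gives the lower bound $L(n) = \Omega(n)$.

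For the upper bound, I would simply cite the result $L(n) \leq n$ established in~\cite{pod2}, as mentioned in the introduction. This immediately gives $L(n) = O(n)$. Putting the two bounds together, we obtain $\frac{n}{2} \leq L(n) \leq n$, and hence $L(n) = \Theta(n)$, which is exactly the claimed asymptotic.

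I do not expect any serious obstacle here, since Theorem~\ref{th2} is a direct corollary of results already proved or cited: the heavy lifting was done in establishing the exact complexity of parity in Theorem~\ref{th1} and in the earlier upper bound on the Shannon function. The only point requiring a little care is to state explicitly the monotonicity relation $L(n) \geq L(p_n)$ that follows from the definition of the Shannon function as a maximum over all functions, so that the lower bound on a single specific function transfers to a lower bound on $L(n)$.
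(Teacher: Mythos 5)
Your proposal is correct and follows essentially the same route as the paper: the lower bound $L(n)\geq L(p_n)=\left\lfloor\frac{n+1}{2}\right\rfloor\geq\frac{n}{2}$ from Theorem~\ref{th1} combined with the upper bound $L(n)\leq n$ from~\cite{pod2}. Your explicit mention of the monotonicity step $L(n)\geq L(p_n)$ (which the paper leaves implicit in the word ``clearly'') is a minor, welcome clarification, not a different argument.
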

\begin{proof}
From Theoreom~\ref{th1} we clearly obtain the bound $L(n)\geq \frac{1}{2}n$ for all natural $n. $ Also in~\cite{pod2} is shown the upper bound for the Shannon function: $L(n)\leq n.$ These two bounds together prove the asymptotic of the Shannon function: $L(n)=\Theta(n)$ for all $n.$ 
\end{proof}

\par\textbf{Acknowledgements.} The author is grateful to O.\,M.\,Kasim-Zade for stating the problem and for permanent attention to the work.

{\small
\bibliographystyle{abbrv}
\bibliography{bib/bibip}
}

%\includepdfmerge[offset=15mm 10mm]{linear_bounds.pdf,-}

\end{document}